\newtheorem{new_theorem}
{Theorem}[section]
\newtheorem{new_definition}
[new_theorem]{Definition}
\newtheorem{new_remark}
[new_theorem]{Remark}
\newtheorem{new_claim}
[new_theorem]{Claim}
\newtheorem{new_question}
[new_theorem]{Question}
\newtheorem{new_example}
[new_theorem]{Example}
\newtheorem{new_lemma}
[new_theorem]{Lemma}
\newtheorem{new_proposition}
[new_theorem]{Proposition}
\newtheorem{new_corollary}
[new_theorem]{Corollary}
\newenvironment{remark}
{\begin{new_remark}\rm}
{\end{new_remark}}
\newenvironment{question}
{\begin{new_question}\rm}
{\end{new_question}}
\newenvironment{lemma}
{\begin{new_lemma}\rm}
{\end{new_lemma}}
\newenvironment{proposition}
{\begin{new_proposition}\rm}
{\end{new_proposition}}
\newenvironment{theorem}
{\begin{new_theorem}\rm}
{\end{new_theorem}}
\newenvironment{proof}
{\medskip\noindent{\bf Proof.}$\ $}
{}
\def\squareforqed{\hbox{\rlap{$\sqcap$}$\sqcup$}}
\def\qed{\ifmmode\squareforqed\else{\unskip\nobreak\hfil
\penalty50\hskip1em\null\nobreak\hfil\squareforqed
\parfillskip=0pt\finalhyphendemerits=0\endgraf}\fi}
\newcommand{\z}{\mathbb{Z}}
\newcommand{\n}{\mathbb{N}}
\newcommand{\an}{A^{\mathbb{N}}}
\newcommand{\az}{A^{\mathbb{Z}}}
\newcommand{\AZ}{(\az,F)}
\newcommand{\ie}{i.e.\@\xspace}
\newcommand{\wrt}{w.r.t.\@\xspace}
\newcommand{\modulo}[2]{\left[ #1\right]_{#2}}
\newcommand{\zm}{\mathbf{Z}_m}
\newcommand{\zpk}{\ensuremath{\mathbf{Z}_{p^k}}\xspace}
\newcommand{\zpini}{\mathbf{Z}_{p_i^{n_i}}}
\title{Strictly Temporally Periodic Points in Cellular Automata\thanks{This work has been supported by the PRIN/MIUR project ``Formal Languages and 
Automata: Mathematical and Applicative Aspects''}
}
\author{Alberto Dennunzio
\institute{Universit\`a degli Studi di Milano--Bicocca\\
Dipartimento di Informatica, Sistemistica e Comunicazione,\\
Viale Sarca 336, 20126 Milano (Italy)}
\email{dennunzio@disco.unimib.it}
\and
Pietro Di Lena \qquad\qquad Luciano Margara
\institute{Universit\`a degli Studi di Bologna,
Dipartimento di Scienze dell'Informazione,\\
via Mura Anteo Zamboni 7, 40127 Bologna (Italy)}
\email{\quad dilena@cs.unibo.it \quad\qquad margara@cs.unibo.it}
}
\begin{document}
\maketitle
\begin{abstract}
We study the set of strictly temporally periodic points in surjective cellular automata, \ie,  the set of those configurations which are temporally periodic for a given automaton but are not spatially periodic. This set turns out to be residual for equicontinuous surjective cellular automata, dense for almost equicontinuous surjective cellular automata, while it is empty for the positively expansive ones. In the class of additive cellular automata, the set of strictly temporally periodic points can be either dense or empty. The latter happens if and only if the cellular automaton  is topologically transitive.
\end{abstract}

\textbf{Keywords:} cellular automata, symbolic dynamics, spatially and temporally periodic configurations
\section{Introduction}
Cellular Automata (CA) are a simple formal model for complex systems, \ie, those systems defined by a multitude of simple objects which cooperate to build a (unexpected) complex global behavior by é means of local interactions. CA are used in many scientific fields ranging from biology to chemistry or from physics to computer science (see for instance~\cite{chemical,tumour,chop,chaudhuri97,Wolfram02,FD08}). 

A cellular automaton is made of an infinite set of finite automata distributed over a regular 
lattice (usually $\z^n$, with $n=1$ in this work). All automata are identical. Each automaton assumes a state, 
chosen from a finite set, called the set of states or the alphabet. A configuration is a snapshot of all the states of the automata. A local rule updates the state of an automaton on the basis of its current state and those of a fixed set of neighboring 
automata. All the automata of the lattice are updated synchronously and this global updating gives rise to a discrete dynamical system on the configuration space. 

Several dynamical properties of CA have been studied during the last two decades (see for instance~\cite{capka09, DLT, Dennunzio10cie, noi, IC12} for recent results and an up-to-date bibliography). On the basis of the well-known results from~\cite{Ku97} and~\cite{blanchard97}, one-dimensional CA can be classified from the most stable to the most unstable behavior (dynamical complexity classification):
\begin{itemize}
\item equicontinuous CA;
\item non equicontinuous CA admitting an equicontinuous configuration (pure almost equicontinuous CA);
\item sensitive to the initial conditions but non topologically mixing CA;
\item topologically transitive but non positively expansive CA;
\item positively expansive CA.
\end{itemize}

Another significant information about the dynamical behavior of CA (and of general discrete dynamical systems) is given by temporally periodic configurations. If the set of the temporally periodic configurations of a cellular automaton is dense, then the cellular automaton has \emph{dense periodic points (DPO)}. Together with topological transitivity and sensitivity to the initial conditions, DPO is also a fundamental property of the popular Devaney's definition of chaos for discrete dynamical systems~\cite{De89}. In the CA setting, DPO is shared by both the two classes of the surjective almost equicontinuous and closing automata. One of the most challenging, long-standing open problem in CA concerns DPO: is it enjoyed by all surjective CA \cite{boyle99,BT00,BL07,Boyle08}? 
If the answer is affirmative, then chaotic behavior in CA reduces to transitivity, due to the fact that transitive CA are both sensitive and surjective.

We can classify two distinct types of of temporally periodic configurations in CA: the temporally periodic configurations that are also spatially periodic (\emph{jointly periodic points}) and the ones that are not (\emph{striclty temporally periodic points}). In this paper we deal with the set of strictly temporally periodic points in CA. Among all temporally periodic configurations, the strictly periodic configurations are the ones that provide more information about the CA dynamical behavior. 
In fact, if surjective CA have DPO, the set of jointly periodic configurations  is dense for surjective CA in any class of dynamical complexity~\cite{ADF09}. This does not happen for the set of strictly temporally periodic configurations. Indeed, for surjective CA belonging to a certain class, the size of this set turns out to be inversely related to the dynamical complexity of that class. More precisely, in this paper we show that 
\begin{itemize}
\item surjective equicontinuous CA exhibit a residual set of strictly temporally periodic configurations (Proposition~\ref{equic});
\item for almost but non equicontinuous surjective CA  the set of strictly temporally periodic configurations  is dense (Proposition~\ref{almost});
\item positively expansive CA admit no strictly temporally periodic configuration (Proposition~\ref{exp}).
\end{itemize}
We also study the set of strictly temporally periodic configurations for the class of \emph{additive CA}, \ie, those CA whose local rule is defined by an additive function. Despite their simplicity, which  
makes it possible a detailed algebraic analysis, additive CA exhibit many of the complex features of general CA. In this settings, we prove that (Proposition~\ref{spadd})
\begin{itemize}
\item the set of strictly temporally periodic configurations can be either empty or dense;
\item the set of strictly temporally periodic configurations is empty if and only if the additive CA is topologically transitive (or, equivalently, topologically mixing).
\end{itemize}
The paper is organized as follows. In Section \ref{background} we introduce the basic notation and the general background on Cellular Automata. In Section \ref{generalCA} and Section \ref{additiveCA} we prove our main results for general CA and additive CA, respectively. Section \ref{conclusions} is devoted to the final remarks.

\section{Basic Notions}\label{background}
In this section, we briefly recall standard definitions about CA
as discrete dynamical systems. We begin by introducing some general notation we will use throughout the rest of the paper.
\\\\
For all $i,j\in\z$ with $i\leq j$ (resp., $i<j$), we use the notation $[i,j]=\{i,i+1,\ldots,j\}$ (resp., $[i,j)=\{i,i+1,\ldots,j-1\}$) to denote the interval of integers between $i$ and $j$. Let us define $\n_+$ as the set of positive integers.  For any pair of integers $n,m>0$, by $n\mid m$ and $n\nmid m$ we denote the fact that $n$ divides $m$ and $n$ does not divide $m$, respectively.

Let $A$ be a finite alphabet with at least two elements. A \emph{configuration} is a function from $\z$ to $A$. The
\emph{configuration set} $\az$ is usually equipped with the
metric $d$, defined as follows
\[
d(x,y)=\frac{1}{2^n}\;\quad \text{where}\;n=\min\{i\geq 0\,:\,x_i\ne
y_i \;\text{or}\;x_{-i}\ne y_{-i}\}\enspace.
\]
The set $\az$ is a compact, totally
disconnected and perfect topological space (\ie, $\az$ is a Cantor
space).

For any pair $i,j\in\z$, with $i\leq j$, and any configuration $x\in\az$ we denote by $x_{[i,j]}$ the word
$x_i\cdots x_j\in A^{j-i+1}$, \ie, the portion of $x$ inside the interval $[i,j]$. In the previous notation, $[i,j]$ can be replaced by either $[i,j)$, or $[i,\infty)$, or $(-\infty,i]$ with the obvious meaning. A \emph{cylinder} of block $u\in A^k$ at  position $i\in\z$ is the set $[u]_i=\{x\in A^{\z}: x_{[i,i+k)}=u\}$. Cylinders are \emph{clopen} (closed and open) sets \wrt the metric $d$ and
they form a basis for the topology induced by $d$. 

The \emph{shift map}  $\sigma:\az\to\az$ is defined as $\sigma(x)_i=x_{i+1}$, for any $x\in\az$ and  $i\in\z$. The shift map is a continuos and bijective function on $\az$. The dynamical system $(\az,\sigma)$ is commonly called \emph{full shift}.
\\\\
\textbf{One-dimensional CA.}
Formally, a one dimensional \emph{Cellular Automaton} (CA) is a pair $(\az, F)$ where $F:\az\to\az$ is a continuous and $\sigma$-commuting function, \ie, $F \circ \sigma=\sigma \circ F$. Equivalently, by Hedlund's Theorem \cite{hedlund69}, a pair $(\az, F)$ is a CA if and only if there exist a natural $r\in\n$ and a map $f: A^{2r+1} \to A$ such that,
\[
\forall x\in \az,\,\forall i\in\z ,\quad F(x)_i= f(x_{i-r},
\ldots, x_{i+r})\enspace.
\]
The function $F$ is commonly called  \emph{global rule} of the CA. The natural $r$ and the map $f$ are commonly called the \emph{radius} and the \emph{local rule} of the CA, respectively.

\medskip
A CA with global rule $F$ is \emph{right} (resp., \emph{left})
\emph{closing} iff $F(x)\neq F(y)$ for any pair
$x, y\in\az$ of distinct left (resp., right) asymptotic
configurations, \ie, $x_{(-\infty,n]}=y_{(-\infty,n]}$
(resp., $x_{[n,\infty)}=y_{[n,\infty)}$) for some $n\in\z$. 
A CA is said to be
\emph{closing} if it is either left or right closing. 
Every closing CA is also surjective \cite{hedlund69}.

A rule $f:A^{2r+1} \to A$ is \emph{righmost} (resp., \emph{leftmost})
\emph{permutative} iff $\forall u\in A^{2r}, \forall\beta\in
A,\exists \alpha\in A$ such that $f(u\alpha)=\beta$ (resp.,
$f(\alpha u)=\beta$). A CA is said to be \emph{permutative} if its local rule is either rightmost or leftmost permutative.
Permutative rules are closing. 

A CA $\AZ$ is said to be \emph{right} if its local rule $f$ does not depend on the variables $x_{-r}, \ldots, x_{-1}$. In that case, $F$ can be naturally redefined as a function on $\an$ and  the pair $(\an, F)$ is a  \emph{one-sided} CA, also called the \emph{lifted version} of the two-sided CA $\AZ$~\cite{ADF09}.

The \emph{product} of two CA $(\az, F)$ and $(B^{\z}, G)$ is the CA $(\az\times B^{\z}, F\times G)$ defined as $\forall (x,y)\in \az\times B^{\z}$, $(F\times G)(x,y)=(F(x), F(y))$. The configuration space $\az\times B^{\z}$ is as usual endowed with the distance $d_{\infty}$ such that $d_{\infty}(x,y)(x',y'))=\max\{d(x,x'),d(y,y')\}$ for every pair $(x,y),(x',y')\in\az\times B^{\z}$.
\medskip

Recall that two CA $F$ and $G$ over the alphabets $A$ and $B$ are 
\emph{topologically conjugated} if
there exists a homeomorphism $\phi:\az\mapsto
B^{\z}$ such that $G\circ \phi=\phi\circ F$. The CA $F$ is a factor of the CA $G$ if  there exists a continuous and
surjective map $\phi:\az\mapsto B^{\z}$  such that $G\circ \phi=\phi\circ F$.
For any right CA $\AZ$, the one-sided lifted CA $(A^{\n}, F)$ is a factor of it.\\\\
\textbf{Additive CA.}
In this work we will focus in particular on the class of \emph{additive
CA}, \ie, CA based on an additive local rule defined over the ring
$\zm=\{0, 1, \ldots, m-1\}$. A function $f:\zm^{2r+1}\to\zm$ is
said to be additive if there exist coefficients
$a_{-r},\ldots, a_r\in\zm$ such that it can be
expressed as:
\[
\forall (x_{-r}, \ldots, x_r)\in\zm^{2r+1}, \quad f(x_{-r}, \ldots,
x_r)=\modulo{\sum_{i=-r}^r a_i x_i}{m}
\]
where $\modulo{x}{m}$ is the integer $x$ taken modulo $s$.  A CA
is \emph{additive} if its local rule is additive.
Clearly, the product of two additive CA is still additive. 

For any additive CA $(\mathbf{Z}^{\mathbb{Z}}_{m}, F)$ and any integer $p\in [2,m)$ the pair $(\mathbf{Z}^{\mathbb{Z}}_{p}, [F]_p)$ is the additive CA where $[F]_p:\mathbf{Z}^{\mathbb{Z}}_{p}\to\mathbf{Z}^{\mathbb{Z}}_{p}$ is defined as $[F]_p(x)=\modulo{F(x)}{p}$, for every $x\in\mathbf{Z}^{\mathbb{Z}}_{p}$.\\\\

\textbf{Dynamical Properties of CA.} In this subsection we review the basic background and notation on CA as dynamical systems.\\

\emph{Equicontinuous and Almost Equicontinuous CA}. Let $(\az, F)$ be a CA. A configuration $x\in\az$ is an \emph{equicontinuity point} for $F$ if $\forall\varepsilon>0$
there exists $\delta>0$ such that for all $y\in\az$, $d(x,y)<\delta$ implies that $d(F^n(y),F^n(x))<\varepsilon$ for all $n\in\n$. The existence of an equicontinuity point is related to the
existence of a special word, called \emph{blocking word}. A word $u\in A^k$ is $s$-blocking ($s\leq k$) for a CA $F$ if there exists an offset $j\in [0, k-s]$ such that for any $x,y\in [u]_0$
and any $n\in\n$, $F^n(x)_{[j,j+s)}=F^n(y)_{[j,j+s)}$\,. A word $u\in A^k$ is said to be \emph{blocking} if it is $s$-blocking for some $s\leq k$. 
$F$ is said to be \emph{equicontinuous} if $\forall\varepsilon>0$ there exists $\delta>0$ such that for all $x,y\in\az$,
$d(x,y)<\delta$ implies that $\forall n\in\n,\;d(F^n(x),F^n(y))<\varepsilon$, while
it is said to be \emph{almost equicontinuous} if the set $E$ of its equicontinuity points is residual (\ie, $E$ contains a countable intersection of dense open subsets). Recall that the CA  $F$ is equicontinuous if and only if there exist two integers $q\in\n$ and  $p>0$ such that $F^q=F^{q+p}$. If $F$ is both equicontinuous and surjective then there exists an integer $p>0$ such that $F^p(x)=x$ for all configurations $x\in\az$.\\

\emph{Sensitive to Initial Conditions CA}. Let $(\az, F)$ be a CA. The global function $F$ is \emph{sensitive to initial
conditions} (or simply \emph{sensitive}) if there exists $\varepsilon>0$ such that for any $x\in\az$ and any
$\delta>0$ there is an element $y\in X$ such that
$d(y,x)<\delta$ and $d(F^n(y),F^n(x))>\varepsilon$
for some $n\in\n$. In~\cite{Ku97}, K\r{u}rka 
proved that a one-dimensional cellular automaton is almost equicontinuous iff it is non-sensitive iff it admits a $r$-blocking word.\\

\emph{Topologically Transitive and Topologically Mixing CA}. A cellular automaton $(\az,F)$  is \emph{(topologically) mixing} if for any pair of non-empty open sets $U,V\subseteq\az$ there exists an
integer $n\in\n$ such that for any $t\geq n$ it holds that $F^t(U)~\cap~V~\ne~\emptyset$, while it is  \emph{topologically transitive} if for any
pair of non-empty open sets $U,V\subseteq\az$ there exists an
integer $n\in\n$ such that $F^n(U)\cap V\ne\emptyset$. Clearly, topological mixing implies topological transitivity. For additive cellular automata, topological transitivity is equivalent to topological mixing~\cite{CDM04}. A weaker condition than topological transitivity is the following: a CA $(\az,F)$  is \emph{non-wandering} if for any non-empty open set $U\subseteq\az$ there exists an
integer $n\in\n$ such that $F^n(U)\cap U\ne\emptyset$. In CA settings, transitivity implies surjectivity which in its turn is equivalent to the non-wandering condition~\cite{BT00}.\\

\emph{Positively Expansive CA}. A cellular automaton $(\az,F)$  is is \emph{positively
expansive} if there exists a constant $\varepsilon>0$ such that
for any pair of distinct elements  $x,y\in\az$ we have $d(F^n(x),F^n(y))\geq\varepsilon$ for some $n\in\n$.
Positively Expansive CA are left and right closing and topologically mixing, thus they
are also surjective and sensitive~\cite{Ku97,blanchard97}.  Another strong property of Positively Expansive CA
is that they are topologically conjugated to a one-sided full shift \cite{Na95}.\\\\
\textbf{Periodic orbits of CA.}
In this subsection we introduce the basic notation and basic properties for the different periodic orbits classes of CA.

Let $\AZ$ be a CA. A configuration $x\in \az$ is a \emph{temporally periodic point} of $\AZ$ if  there exists an integer $p>0$ such that $F^p(x)=x$.  
A configuration $x\in\az$ is \emph{spatially periodic} if $x$ is periodic for $\sigma$, \ie,  $\sigma^{n}(x)=x$ for some $n\in\n$. A \emph{jointly periodic} point is any configuration which is both temporally and spatially periodic. We denote by \emph{$SP(F)$}  the set of all \emph{spatially periodic configurations} of $\AZ$, with $TP(F)$ the set of all \emph{temporally periodic configurations}, with \emph{$JP(F)=SP(F)\cap TP(F)$}  the set of all \emph{jointly periodic configurations}, and with \emph{$STP(F)=TP(F) \setminus JP(F)$} the set of all \emph{strictly temporally periodic configurations of $F$}, i.e. those configurations that are temporally periodic but not spatially periodic. Note that, given a CA $\AZ$, the set $TP(F)$ is never empty. In particular, since $F(SP(F))\subseteq SP(F)$, it happens that $JP(F)$ is never empty. 

We say that a CA has \emph{dense periodic orbits} (DPO) or \emph{dense jointly periodic orbits} (JDPO) if $TP(F)$ or $JP(F)$ are dense, respectively. It is easy to obtain that surjectivity is a necessary condition for DPO or JDPO. It is still an open question whether surjectivity is a sufficient condition for DPO or JDPO \cite{BL07,Boyle08}. Among the most relevant results about this open question, we recall that all closing CA ~\cite{boyle99} and all surjective and almost equicontinuous  CA~\cite{BT00} have JDPO~\cite{boyle99}. In the case of additive CA, surjectivity implies DPO~\cite{CDM04}.

\section{Strictly temporally periodic points of surjective CA}\label{generalCA}
In this section, we consider the set of strictly temporally periodic points, $STP(F)$, for general surjective CA and we try to study its size in the different classes of increasing dynamical complexity.

In order to characterize the cardinality of the set of strictly temporally periodic points for Equicontinuous CA we need to show an easy property of
spatially periodic configurations. The following Lemma  shows that, for every CA $\AZ$, its set of spatially periodic points $SP(F)$ is \emph{meager}, i.e. negligible.

\begin{lemma} \label{residual} Let $\AZ$ be a CA. Then the set $\az \setminus SP(F)$ is residual.
\end{lemma}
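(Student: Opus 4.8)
The plan is to exploit the fact that the set $SP(F)$ does not actually depend on the global rule $F$: a configuration is spatially periodic precisely when it is a periodic point of the shift, so
\[
SP(F)=\bigcup_{n\in\n_+}P_n,\qquad\text{where}\quad P_n=\{x\in\az:\sigma^n(x)=x\}.
\]
Hence it suffices to show that each $P_n$ is nowhere dense in $\az$, and then to read off that the complement of $SP(F)$ is a countable intersection of dense open sets, \ie residual.

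First I would observe that every $P_n$ is a \emph{finite} subset of $\az$: a configuration $x$ with $\sigma^n(x)=x$ is completely determined by the block $x_{[0,n)}$, so $|P_n|=|A|^n$. Being finite and $\az$ being metric (hence Hausdorff), $P_n$ is closed; alternatively, closedness follows from continuity of $\sigma^n$ together with that of the identity. Next, $P_n$ has empty interior: since $\az$ is a perfect space, every non-empty open set — in particular every cylinder — contains infinitely many configurations, and so cannot be contained in the finite set $P_n$. Therefore each $P_n$ is closed with empty interior, that is, nowhere dense.

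Consequently $SP(F)=\bigcup_{n\in\n_+}P_n$ is a countable union of nowhere dense sets, hence meager, and its complement
\[
\az\setminus SP(F)=\bigcap_{n\in\n_+}\bigl(\az\setminus P_n\bigr)
\]
is a countable intersection of the sets $\az\setminus P_n$, each of which is open (complement of a closed set) and dense (complement of a nowhere dense closed set). Thus $\az\setminus SP(F)$ is residual, as claimed. There is no genuine obstacle here; the only point worth emphasising is that the argument is purely topological and uses nothing about $F$ beyond the Cantor structure of $\az$ — specifically, the absence of isolated points is exactly what upgrades ``finite'' to ``nowhere dense''.
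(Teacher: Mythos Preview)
Your proof is correct and follows essentially the same approach as the paper: both arguments decompose $SP(F)$ as a countable union of finite closed sets and use perfectness of $\az$ to conclude each piece is nowhere dense. The only cosmetic difference is the indexing --- the paper writes $SP(F)=\{w^{\infty}:w\in A^+\}$ and takes the complement of each singleton, whereas you group the periodic configurations by period via the sets $P_n$; your version has the minor advantage of making the role of perfectness explicit rather than implicit.
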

\begin{proof} Recall that a residual set is the complement of a meager set and it can be equivalently defined as the countable intersection of dense open sets.
For every $w\in A^+$, let $U_w=\az \setminus \{w^\infty\}$. Clearly, every $U_w$ is an open and dense subset. 
Since $\az\setminus S=\bigcap_{w\in A^+}U_w$ and $A^+$ is countable, then the set  $\az \setminus S$ is residual.\qed
\end{proof}
\medskip

Thanks to Lemma \ref{residual} we can easily characterize the class of strictly temporally periodic orbits of Equicontinuous CA.

\begin{proposition}\label{equic} Let $(\az,F)$ be an Equicontinuous and surjective CA. Then,  $STP(F)$ is residual.
\end{proposition}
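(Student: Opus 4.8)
The plan is to exploit the structural characterization of equicontinuous surjective CA recalled in the preliminaries: since $F$ is equicontinuous and surjective, there exists an integer $p>0$ such that $F^p(x)=x$ for \emph{all} configurations $x\in\az$. In particular every configuration is temporally periodic, so $TP(F)=\az$ and hence $STP(F)=\az\setminus JP(F)=\az\setminus (SP(F)\cap TP(F))=\az\setminus SP(F)$.

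First I would invoke the equicontinuity-plus-surjectivity fact to get the uniform period $p$ with $F^p=\mathrm{id}$, and record $TP(F)=\az$. Then I would observe that $JP(F)=SP(F)\cap TP(F)=SP(F)$, since every spatially periodic point is now automatically temporally periodic. Consequently $STP(F)=TP(F)\setminus JP(F)=\az\setminus SP(F)$.

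Finally I would conclude by applying Lemma~\ref{residual}, which states precisely that $\az\setminus SP(F)$ is residual. This gives that $STP(F)$ is residual, as claimed.

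There is essentially no obstacle here: the entire content is packaged in the two facts already available — the uniform-periodicity property of equicontinuous surjective CA and Lemma~\ref{residual}. The only point requiring a line of care is the identity $STP(F)=\az\setminus SP(F)$, which hinges on the fact that under the hypotheses \emph{every} configuration, and in particular every spatially periodic one, lies in $TP(F)$.
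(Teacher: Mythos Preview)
Your proposal is correct and follows essentially the same approach as the paper: invoke the fact that an equicontinuous surjective CA satisfies $F^p=\mathrm{id}$ for some $p>0$, deduce $TP(F)=\az$ and hence $STP(F)=\az\setminus SP(F)$, and then apply Lemma~\ref{residual}. The paper's proof is the same argument, written a bit more tersely.
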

\begin{proof}
By hypothesis, there exists an integer $n>0$ such that every configuration $x\in\az$ is a temporally periodic point such that $F^n(x)=x$. Thus, it holds that $STP(F)=\az\setminus SP(F)$ and, by Lemma~\ref{residual}, it immediately follows that $STP(F)$ is residual.\qed
\end{proof}
\medskip

From Proposition \ref{equic}, the class $STP(F)$ for Equicontinuous CA is the complement of a nowhere dense set, thus it is dense in the configuration space. The property of being dense also holds for the larger class of Almost Equicontinuous CA.

\begin{proposition} \label{almost} 
Let $(\az,F)$ be an Almost Equicontinuous and surjective CA. Then, $STP(F)$ is dense.
\end{proposition}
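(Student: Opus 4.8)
The plan is to show that $STP(F)$ meets every cylinder. Fix $x\in\az$ and $N\in\n$; since the cylinders form a basis, it suffices to construct $z\in STP(F)$ with $z_{[-N,N]}=x_{[-N,N]}$. By K\r{u}rka's characterization recalled above, $F$ admits an $r$-blocking word $u\in A^k$ with some offset $j\in[0,k-r]$, so whenever $u$ occurs at a position $a$ the columns $F^n(w)_{[a+j,a+j+r)}$ agree for all $w\in[u]_a$ and all $n\in\n$. Since surjective almost equicontinuous CA have JDPO, the set $JP(F)$ is dense. Fixing a position $a>N$, I would first apply this density to the non-empty open set $[x_{[-N,N]}]_{-N}\cap[u]_a$ to obtain a jointly periodic configuration $\tilde y$ with $\tilde y_{[-N,N]}=x_{[-N,N]}$ and $\tilde y_{[a,a+k)}=u$. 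As $[u]_a$ is a non-empty clopen, hence perfect, subset of $\az$, the dense subset $JP(F)\cap[u]_a$ is infinite, so I can also choose a jointly periodic $\tilde y'\in[u]_a$ with $\tilde y'\neq\tilde y$. Let $T$ and $T'$ be temporal periods of $\tilde y$ and $\tilde y'$, and set $\mathcal T=\mathrm{lcm}(T,T')$.

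Then I would glue $\tilde y$ and $\tilde y'$ along the occurrence of $u$ at $a$: set $z_i=\tilde y_i$ for $i<a$, $z_{[a,a+k)}=u$, and $z_i=\tilde y'_i$ for $i\geq a+k$. Because both $\tilde y$ and $\tilde y'$ carry $u$ at position $a$, this yields $z_{(-\infty,a+k)}=\tilde y_{(-\infty,a+k)}$ and $z_{[a,\infty)}=\tilde y'_{[a,\infty)}$, while $z$ itself still carries $u$ at $a$. Since $a>N$ we have $z_{[-N,N]}=\tilde y_{[-N,N]}=x_{[-N,N]}$, which is what we want; it only remains to see that $z$ is temporally but not spatially periodic.

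The technical core is a blocking-barrier lemma: if $w,w'\in[u]_a$ agree on $(-\infty,a+k)$, then $F^n(w)$ and $F^n(w')$ agree on $(-\infty,a+j+r)$ for every $n\in\n$; symmetrically, if $w,w'\in[u]_a$ agree on $[a,\infty)$ then $F^n(w)$ and $F^n(w')$ agree on $[a+j,\infty)$ for every $n$. Each is proved by induction on $n$: on the columns $[a+j,a+j+r)$ the images coincide because $w,w'\in[u]_a$, while a cell strictly outside those columns, on the relevant side, has its whole dependency window $[i-r,i+r]$ inside the region where the inductive hypothesis already gives coincidence. Applying the first statement with $w=z$, $w'=\tilde y$, and using $F^{\mathcal T}(\tilde y)=\tilde y$ together with $\tilde y=z$ on $(-\infty,a+k)$, gives $F^{\mathcal T}(z)=z$ on $(-\infty,a+j+r)$; applying the second with $w=z$, $w'=\tilde y'$, and using $F^{\mathcal T}(\tilde y')=\tilde y'$ together with $\tilde y'=z$ on $[a,\infty)$, gives $F^{\mathcal T}(z)=z$ on $[a+j,\infty)$. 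These two half-lines exhaust $\z$, hence $F^{\mathcal T}(z)=z$ and $z\in TP(F)$.

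Finally $z\notin SP(F)$: if $z$ were spatially periodic it would agree with the spatially periodic $\tilde y$ on the left-infinite ray $(-\infty,a+k)$, and two bi-infinite periodic configurations agreeing on such a ray must coincide (compare the value at $i$ with that at $i$ decreased by a large common multiple of the two periods), so $z=\tilde y$; applying the same fact on the right-infinite ray $[a,\infty)$, where $z=\tilde y'$, would then force $\tilde y=\tilde y'$, contradicting the choice of $\tilde y'$. Hence $z\in TP(F)\setminus SP(F)=STP(F)$, and density follows. I expect the two genuinely delicate points to be: obtaining the auxiliary jointly periodic $\tilde y'$ that still carries $u$ at the prescribed position $a$ but differs from $\tilde y$ — which is exactly why density of $JP(F)$ is exploited \emph{inside} the cylinder $[u]_a$, together with the perfectness of $\az$ — and checking that a single occurrence of the blocking word already makes both one-sided gluings temporally periodic at once, i.e. that the two half-lines controlled by the barrier lemma overlap and cover $\z$.
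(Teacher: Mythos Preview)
Your proof is correct, but it follows a genuinely different route from the paper's.

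The paper builds the desired point explicitly as $y={}^{\infty}w\,u\,w^{\infty}$, where $w$ is the $r$-blocking word and $u=x_{[-k,k]}$, and then exploits only the \emph{non-wandering} property of surjective CA: applying it to the cylinder $[wwuww]_{-k-2s}$ produces a return time $t$, and the blocking-barrier argument then forces $F^{t}(y)=y$. Non-spatial-periodicity is obtained by the freedom in choosing $u$ relative to $w$.

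You instead invoke the stronger, but known, fact that surjective almost equicontinuous CA have \emph{JDPO}: you pick two distinct jointly periodic configurations $\tilde y,\tilde y'$ that both carry the blocking word at a fixed position $a$, glue them along that occurrence, and inherit temporal periodicity from the common multiple $\mathcal{T}$ via the barrier lemma. Your non-spatial-periodicity argument, based on the elementary observation that two bi-infinite periodic configurations agreeing on a half-line must coincide, is actually cleaner and more robust than the paper's ``$u\neq w$'' clause. The trade-off is that you import a heavier external result (JDPO from \cite{BT00}) where the paper gets by with the more elementary non-wandering condition; on the other hand, your construction avoids having to locate the return time explicitly and makes the aperiodicity transparent.
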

\begin{proof}
Choose arbitrarily a configuration $x\in\az$ and an integer $k\in\n$.  Since $F$ is almost equicontinuous, there exist some integers $s,h\in\n$ such that  $F$ admits an $r$-blocking word $w\in A^s$  with offset $h$. Consider now the configuration $y=^\infty{w}uw^\infty\in [u]_0$, where $u=x_{[-k,k]}$ and without loss of generality we can assume that $u\neq w$ so that $y$ is not spatially periodic. Since surjective CA are non-wandering, there exist an integer $t>0$ and a configuration $z\in\az$ such that 
$z\in F^t([wwuww]_{-k-2s})\cap[wwuww]_{-k-2s} \neq \emptyset$. As $z$, $F^t(z)$ and $y$ belong to $[wwuww]_{-k-2s}$ and $wwuww$ is a blocking word, it follows that
\begin{align*}
F^t(y)_{[-k-2s+h, k+s+h+r)}&= F^t(z)_{[-k-2s+h, k+s+h+r)}\\
&= z_{[-k-2s+h, k+s+h+r)}\\
&= y_{[-k-2s+h, k+s+h+r)}
\end{align*}
Furthermore, also the word $ww$ is blocking. So, for any integer $i\in\z$ and any configuration $c\in[ww]_i$, it holds that 
\begin{align*}
F^t(c)_{[i+h, i+s+h+r)}&= F^t(y)_{[k+h, k+s+h+r)}\\
&= y_{[k+h, k+s+h+r)}\\
&= c_{[i+h, i+s+h+r)}
\end{align*}
Therefore, it follows that $F^{t}(y)=y$. Hence, $y$ is a temporally periodic point for $F$ such that  $d(y,x)<\frac{1}{2^k}$. Thus, the set  $STP(F)$ is dense.\qed
\end{proof}
\medskip

The following Proposition shows that there is a class of CA whose set of strictly temporally periodic orbits is empty. Recall that this property is never true for jointly periodic orbits, since every CA has at least one configuration that is both spatially and temporally periodic.

\begin{proposition} \label{exp} Let $\AZ$ be a positively expansive CA. Then $STP(F)$ is empty.
\end{proposition}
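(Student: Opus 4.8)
The plan is to prove the stronger statement that every temporally periodic point of a positively expansive CA is already spatially periodic; this gives the Proposition at once, since then $JP(F)=SP(F)\cap TP(F)=TP(F)$ and hence $STP(F)=TP(F)\setminus JP(F)=\emptyset$.

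The engine will be the fact recalled in the preliminaries that a positively expansive CA $(\az,F)$ is topologically conjugate to a one-sided full shift $(B^{\n},\sigma_B)$ for some finite alphabet $B$; fix such a conjugacy $\phi$, so $\phi\circ F=\sigma_B\circ\phi$. First I would fix an integer $p>0$ and observe that $\phi$ restricts to a bijection between $\mathrm{Fix}(F^p)=\{x\in\az:F^p(x)=x\}$ and $\mathrm{Fix}(\sigma_B^p)=\{s\in B^{\n}:\sigma_B^p(s)=s\}$, because $\phi\circ F^p=\sigma_B^p\circ\phi$ and $\phi$ is a homeomorphism. The right-hand set is exactly the set of one-sided sequences of spatial period dividing $p$, so it has exactly $|B|^p$ elements. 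Hence $\mathrm{Fix}(F^p)$ is finite.

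Next I would exploit that $F$ commutes with the shift $\sigma$ on $\az$: from $\sigma\circ F^p=F^p\circ\sigma$ and the bijectivity of $\sigma$, the finite set $\mathrm{Fix}(F^p)$ is $\sigma$-invariant. Therefore, for every $x\in\mathrm{Fix}(F^p)$, the forward orbit $x,\sigma(x),\sigma^2(x),\dots$ stays inside this finite set, so $\sigma^i(x)=\sigma^j(x)$ for some $i<j$, and injectivity of $\sigma$ yields $\sigma^{\,j-i}(x)=x$; thus $x$ is spatially periodic. Since $TP(F)=\bigcup_{p>0}\mathrm{Fix}(F^p)$, we get $TP(F)\subseteq SP(F)$, and the conclusion follows.

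I do not anticipate a genuine obstacle; the only points needing care are invoking Nasu's conjugacy correctly (it may produce a strictly larger alphabet $B$, which is harmless here) and the elementary last step about a finite $\sigma$-invariant set. As an alternative that avoids the conjugacy, one can argue finiteness of $\mathrm{Fix}(F^p)$ directly from positive expansivity with constant $\varepsilon$: the orbit map $x\mapsto(x,F(x),\dots,F^{p-1}(x))$ embeds $\mathrm{Fix}(F^p)$ into $(\az)^p$, and two periodic orbits that stay within $\varepsilon$ at all times $0,\dots,p-1$ stay within $\varepsilon$ at all times (by periodicity) and hence coincide (by expansivity); so the image is $\varepsilon$-separated in the compact space $(\az)^p$, thus finite. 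Either route then feeds into the same final two lines about the $\sigma$-action on a finite invariant set.
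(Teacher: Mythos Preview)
Your proof is correct and follows essentially the same approach as the paper: both use Nasu's conjugacy of $(\az,F)$ with a one-sided full shift to see that $\mathrm{Fix}(F^p)$ is finite (of size $|B|^p$), and then use $\sigma$-invariance of this finite set to force spatial periodicity; the paper merely phrases the last step as a contradiction (an infinite $\sigma$-orbit inside a finite set) where you argue directly. Your alternative route, establishing finiteness of $\mathrm{Fix}(F^p)$ straight from the expansivity constant via an $\varepsilon$-separated set in $(\az)^p$, is a more elementary variant that avoids invoking Nasu's theorem and is not in the paper.
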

\begin{proof}
Let $\AZ$ be a positively expansive CA and let $x\in\az$ be any temporally periodic configuration for $F$. Let $t>0$ be an integer such that $F^t(x)=x$. For the sake of argument, assume now that $x\in STP(F)$, \ie, $x$ is not spatially periodic. Then, $\{\sigma^n(x)\}_{n\in\n}$ is an infinite set of distinct strictly temporally periodic points and in particular, since $F$ is $\sigma$-commuting, for each $\sigma^n(x)$ it holds that $F^t(\sigma^n(x))=\sigma^n(x)$. By the characterization of positively expansive CA from~\cite{Na95}[Thm 3.12], there exists an alphabet $B$ and a homeomorphism $\phi:\az\to B^{\n}$ such that the CA $(\az, F)$ is topologically conjugated via $\phi$ to the one-sided full shift $(B^{\n},\sigma_*)$ where $\sigma_*$ is shift map defined on $B^{\n}$. Clearly, all $\phi(\sigma^n(x))$ are distinct temporally periodic points for $\sigma_*$ such that $\sigma_*^t(\phi(\sigma^n(x)))=\phi(\sigma^n(x))$. Since there are exactly $|B|^t$ points $y\in B^{\n}$ such that $\sigma_*^t(y)=y$ but the set $\{\phi(\sigma^n(x))\}_{n\in\n}$ is infinite, we have obtained a contradiction.\qed
\end{proof}
\medskip

While the existence of an equicontinuity  point  implies a dense set of strictly temporally periodic orbits (Proposition \ref{almost}), the converse is not true, as shown by the following Proposition.

\begin{proposition} \label{sensit}
There is a sensitive CA $(\az,F)$ such that   $STP(F)$ is dense.
\end{proposition}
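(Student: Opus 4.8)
The plan is to take $F$ to be the product (in the sense recalled in Section~\ref{background}) of the shift, which will supply sensitivity, and the identity, which is an equicontinuous surjective CA all of whose configurations are temporally periodic, so that temporal periodicity in that coordinate costs nothing. Concretely, fix a finite alphabet $A$ with $|A|\geq 2$ and let $F=\sigma\times\mathrm{id}$ act on $\az\times\az$; by the discussion in Section~\ref{background} this is a CA, and it is surjective since $\sigma$ and $\mathrm{id}$ are.

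Next I would verify that $F$ is sensitive, with sensitivity constant $\varepsilon=\tfrac12$. Given $(x,y)\in\az\times\az$ and $\delta>0$, choose $k$ with $2^{-k}<\delta$, keep the second coordinate fixed, and pick $x'\in\az$ agreeing with $x$ on $[-k,k]$ but with $x'_{k+1}\neq x_{k+1}$. Then $d_\infty((x',y),(x,y))<\delta$, whereas $F^{k+1}(x',y)$ and $F^{k+1}(x,y)$ differ in the first coordinate at position $0$, so $d_\infty(F^{k+1}(x',y),F^{k+1}(x,y))=1\geq\varepsilon$.

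Then I would pin down $STP(F)$. Let $P\subseteq\az$ be the set of spatially periodic configurations of the full shift. Since $F^p(x,y)=(\sigma^p(x),y)$ for every $p>0$, a pair $(x,y)$ is temporally periodic for $F$ iff $\sigma^p(x)=x$ for some $p>0$, i.e. iff $x\in P$, with no constraint on $y$; hence $TP(F)=P\times\az$. A pair in $\az\times\az$ is spatially periodic exactly when both coordinates are, so $SP(F)=P\times P$, whence $JP(F)=TP(F)\cap SP(F)=P\times P$ and $STP(F)=TP(F)\setminus JP(F)=P\times(\az\setminus P)$.

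Finally I would check density. The set $P$ is dense in $\az$ (periodizing longer and longer central blocks of a configuration gives a sequence in $P$ converging to it), and $\az\setminus P$ is residual — in particular dense — by Lemma~\ref{residual} applied to the full shift $(\az,\sigma)$. Therefore $P\times(\az\setminus P)$ is dense in $\az\times\az$, so $STP(F)$ is dense; together with Proposition~\ref{almost} this shows that having an equicontinuity point is strictly stronger than having dense $STP(F)$, since a sensitive CA has none. The one point requiring care is the bookkeeping that separates temporal from spatial periodicity in the product — notably that $TP(F)$, while meager (it forces the first coordinate into $P$), is nonetheless dense, which is exactly what makes this example work; I do not anticipate any deeper obstacle.
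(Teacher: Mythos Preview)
Your proof is correct. It differs from the paper's in both the example chosen and the tools invoked. The paper takes the product of an arbitrary surjective almost equicontinuous CA $(\az,F)$ with an arbitrary positively expansive CA $(B^{\z},G)$: sensitivity comes from the $G$ factor, and density of $STP(F\times G)$ is obtained by combining Proposition~\ref{almost} (to get $x'\in STP(F)$ near $x$) with the JDPO of positively expansive CA (to get $y'\in JP(G)$ near $y$), so that $(x',y')$ is temporally periodic but not spatially periodic in its first coordinate. Your construction $\sigma\times\mathrm{id}$ reverses the roles of the two factors---sensitivity is supplied by the shift coordinate, while the non-spatial-periodicity comes from the identity coordinate---and is more elementary: you compute $STP(\sigma\times\mathrm{id})=P\times(\az\setminus P)$ explicitly and conclude using only Lemma~\ref{residual} and the density of spatially periodic configurations, without appealing to Propositions~\ref{almost} or~\ref{exp} or to the JDPO of closing CA. The paper's route buys a whole family of examples and ties the proposition to the surrounding results; yours buys a single explicit, fully self-contained witness.
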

\begin{proof}
We show that for any surjective almost equicontinuous CA $\AZ$ and any positively expansive CA $(B^\z,G)$, the product CA $(\az \times B^\z,F\times G)$ is a sensitive to the initial conditions but non positively expansive CA such that $STP(F\times G)$ is dense. 
Let $\AZ$ and $(B^\z,G)$ be a surjective almost equicontinuous CA and a positively expansive CA, respectively. By definition, the product CA $(\az \times B^\z,F\times G)$ turns out to be a sensitive to the initial conditions CA which is not positively expansive. We now show that STP($F\times G$) is dense in $\az \times B^\z$.
For any element $(x,y) \in \az \times B^\z$ and any integer $k>0$, by Proposition~\ref{almost}, there exists a configuration $x'\in STP(F)$ such that $d(x',x)<\frac{1}{2^k}$ and, by Proposition~\ref{exp},  there exists a configuration $y' \in JP(G)$  such that $d(y',y)<\frac{1}{2^k}$. Therefore, $(x',y') \in STP(F\times G)$ and $d_{\infty}((x',x),(y',y))<\frac{1}{2^k}$. Hence, $STP(F\times G)$ is dense.\qed 
\end{proof}
\medskip

In summary, the previous result tells us that there exist classes of CA for which the set $STP(F)$ is either dense (Proposition \ref{equic} and \ref{almost}) or empty (Proposition \ref{exp}). While the existence of an equicontinuity point implies $STP(F)$ dense (Proposition \ref{almost}), the converse is generally not true (Proposition \ref{sensit}). At this point, the following two questions naturally arise.

\begin{question} Does exist a CA $\AZ$ such that $STP(F)$ is neither empty nor dense?
\end{question}

\begin{question} What is the largest class of sensitive CA where the set of strictly temporally periodic points is empty? Is this class the one of topologically transitive CA? Or, if a CA $\AZ$ is transitive, can $STP(F)$ be non empty?
\end{question}

\section{Additive CA}\label{additiveCA}
In this section, we investigate the set of strictly temporally periodic points for additive Cellular Automata. In this setting, we can provide an answer to the two questions raised in the previous section. 
In particular, we show that for additive Cellular Automata the set of strictly temporally periodic points can be either dense or empty. Moreover, we prove that it is empty if and only if the CA is topologically transitive.

\medskip
We first need to review some very useful characterizations of additive CA.  The first Theorem shows that additive CA can be decomposed in the product of \emph{simpler} additive CA, whose alphabet cardinalities are powers of prime.

\begin{theorem}[\cite{DMM03}] \label{additive} Let $(\mathbf{Z}^{\mathbb{Z}}_{pq}, F)$ be an additive CA such that  $gcd(p,q)=1$. Then, $(\mathbf{Z}^{\mathbb{Z}}_{pq}, F)$ is topologically conjugated to the additive (product) CA $(\mathbf{Z}^{\mathbb{Z}}_{p} \times \mathbf{Z}^{\mathbb{Z}}_{q}, [F]_{p} \times [F]_{q})$.
\end{theorem}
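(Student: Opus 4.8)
The plan is to realise the conjugacy explicitly by means of the Chinese Remainder Theorem (CRT). Since $\gcd(p,q)=1$, the map $a\mapsto(\modulo{a}{p},\modulo{a}{q})$ is a ring isomorphism $\mathbf{Z}_{pq}\to\mathbf{Z}_{p}\times\mathbf{Z}_{q}$ (we may assume $p,q\geq 2$, the case $p=1$ or $q=1$ being trivial; note that then $p,q\in[2,pq)$, so $[F]_p$ and $[F]_q$ are well defined). First I would lift this isomorphism to configurations by setting $\phi:\mathbf{Z}^{\mathbb{Z}}_{pq}\to\mathbf{Z}^{\mathbb{Z}}_{p}\times\mathbf{Z}^{\mathbb{Z}}_{q}$, $\phi(x)=(y,z)$ where $y_i=\modulo{x_i}{p}$ and $z_i=\modulo{x_i}{q}$ for every $i\in\z$. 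Injectivity and surjectivity of $\phi$ follow coordinate by coordinate from the bijectivity of CRT; its inverse $\phi^{-1}(y,z)$ is again defined coordinatewise, placing in position $i$ the unique element of $\mathbf{Z}_{pq}$ congruent to $y_i$ modulo $p$ and to $z_i$ modulo $q$. Because $\phi$ reads each input cell only through the cell in the same position, $x_{[-n,n]}=x'_{[-n,n]}$ forces $\phi(x)$ and $\phi(x')$ to agree on $[-n,n]$ in both components; hence $\phi$ is continuous, and being a continuous bijection from the compact space $\mathbf{Z}^{\mathbb{Z}}_{pq}$ onto a Hausdorff space, it is a homeomorphism.

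Next I would verify the intertwining identity $\phi\circ F=([F]_p\times[F]_q)\circ\phi$. Write the additive local rule of $F$ as $f(u_{-r},\dots,u_r)=\modulo{\sum_{j=-r}^{r}a_j u_j}{pq}$. Since $p\mid pq$, one has $\modulo{\modulo{n}{pq}}{p}=\modulo{n}{p}$ for every integer $n$, and since reduction modulo $p$ is a ring homomorphism,
\[
\modulo{F(x)_i}{p}=\modulo{\sum_{j=-r}^{r}a_j x_{i+j}}{p}=\modulo{\sum_{j=-r}^{r}\modulo{a_j}{p}\,\modulo{x_{i+j}}{p}}{p}.
\]
The right-hand side is exactly the value in position $i$ of $[F]_p$ applied to the configuration $(\modulo{x_k}{p})_{k\in\z}$, because by definition the local rule of $[F]_p$ is $u\mapsto\modulo{\sum_j\modulo{a_j}{p}u_j}{p}$. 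The identical computation with $q$ in place of $p$ handles the second component, so $\phi(F(x))=([F]_p\times[F]_q)(\phi(x))$ for all $x\in\mathbf{Z}^{\mathbb{Z}}_{pq}$. Combined with the previous paragraph, this shows that $\phi$ topologically conjugates $F$ to $[F]_p\times[F]_q$; moreover $[F]_p$ and $[F]_q$ are additive over $\mathbf{Z}_p$ and $\mathbf{Z}_q$, so their product is additive as well.

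There is no genuine difficulty in this argument; the only point demanding care is the bookkeeping of the nested modular reductions, i.e. checking that reduction modulo $p$ passes cleanly through the outer reduction modulo $pq$ and through the linear combination defining $f$, so that the $p$-th component of $F$ is \emph{literally} $[F]_p$ and not merely a rule congruent to it. Everything else --- that $\phi$ and $\phi^{-1}$ are continuous, and that $\phi$ commutes with the shift should one want that too --- is immediate from the fact that $\phi$ acts letter by letter.
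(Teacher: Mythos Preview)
Your argument is correct. Note, however, that the paper does not give its own proof of this theorem: it is quoted from~\cite{DMM03} and used as a black box. The construction you outline---lifting the Chinese Remainder isomorphism $\zpq\cong\zp\times\zq$ coordinatewise to configurations and checking that the resulting letter-to-letter map intertwines $F$ with $[F]_p\times[F]_q$---is exactly the standard proof, and is essentially what appears in the cited source. Your handling of the one delicate point, that $\modulo{\modulo{n}{pq}}{p}=\modulo{n}{p}$ because $p\mid pq$, is precisely what makes the paper's definition of $[F]_p$ coherent and what ensures the intertwining holds on the nose.
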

As a consequence of  the decomposition Theorem, if $m=p_1^{n_1}\cdots p_l^{n_l}$ is
the prime factor decomposition of $m$, an additive CA on $\zm$ is
topologically conjugated to the product of additive CA on $\zpini$. So all the
properties which are preserved under product and under topological
conjugacy are lifted from additive CA on $\zpk$ to $\zm$.
The following Theorem provides a strong characterization of equicontinuous and sensitive additive CA.

\begin{theorem}[\cite{MaMa99}\cite{CDF08}] \label{sensitive} Let $(\mathbf{Z}^{\mathbb{Z}}_{m}, F)$ be an additive CA with local rule $f:\zm^{2r+1}\to\zm$ defined as $f(x_{-r},...x_{r}) = [\Sigma^{r}_{i=-r}a_{i}x_{i}]_{m}$.
Then, the following statements are equivalent:
\begin{enumerate}
\item $(\mathbf{Z}^{\mathbb{Z}}_{m}, F)$ is sensitive to the initial conditions;
\item $(\mathbf{Z}^{\mathbb{Z}}_{m}, F)$ is not equicontinuous;
\item there exists a prime $p\in\n$ such that
\begin{displaymath}
p \mid m\; and \; p \nmid gcd(a_{-r}, ..., a_{-1}, a_{1}, ..., a_{r}).
\end{displaymath}
\end{enumerate}
\end{theorem}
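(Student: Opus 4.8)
\medskip\noindent\emph{Proof strategy.}
I would prove the three statements equivalent through the cycle $(1)\Rightarrow(2)\Rightarrow(3)\Rightarrow(1)$, pushing essentially all of the work into $(2)\Rightarrow(3)$. The implication $(1)\Rightarrow(2)$ needs no additivity at all: an equicontinuous map has every configuration as an equicontinuity point, while a sensitive map has none, so (as $\az\neq\emptyset$) no CA can be both.

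For $(2)\Rightarrow(3)$ I would argue by contraposition, assuming that every prime $p$ dividing $m$ also divides $g:=\gcd(a_{-r},\dots,a_{-1},a_1,\dots,a_r)$ and deducing that $F$ is equicontinuous. The working tool is the Laurent polynomial $a(X)=\sum_{k=-r}^{r}a_kX^{k}$ over $\zm$, for which a one-line induction gives
\[
F^{n}(x)_i=\sum_{k}\bigl([X^{k}]\,a(X)^{n}\bigr)\,x_{i+k}\qquad (n\in\n,\ i\in\z).
\]
By Theorem~\ref{additive}, applied iteratively along the prime factorisation of $m$, and since equicontinuity is preserved under topological conjugacy and under finite products, it is enough to treat $m=p^{k}$ with $p$ prime. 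There the hypothesis says $p\mid a_k$ for every $k\neq0$, so $a(X)=a_0+p\,b(X)$; expanding $a(X)^{p^{k}}$ by the binomial theorem and using $v_p\bigl(\binom{p^{k}}{j}\bigr)=k-v_p(j)$ for $1\le j\le p^{k}$, one checks that every term with $j\geq1$ has $p$-adic valuation $\geq k+\bigl(j-v_p(j)\bigr)>k$ and hence vanishes modulo $p^{k}$. Thus $a(X)^{p^{k}}\equiv a_0^{\,p^{k}}\pmod{p^{k}}$ is a constant, so $F^{p^{k}}$ is coordinatewise multiplication by $c:=\modulo{a_0^{\,p^{k}}}{p^{k}}$; since $\zpk$ is finite one finds $q,s>0$ with $c^{q}=c^{q+s}$, whence $F^{p^{k}q}=F^{p^{k}(q+s)}$ and $F$ is equicontinuous by the characterisation recalled in Section~\ref{background}. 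I expect this to be the main obstacle — not conceptually, but because the $p$-adic bookkeeping on the binomial coefficients is precisely what forces $a(X)^{p^{k}}$ to collapse to a constant, and the reduction to prime-power alphabets via Theorem~\ref{additive} has to be set up with some care.

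For $(3)\Rightarrow(1)$ I would build the sensitivity witness by hand. Fix a prime $p\mid m$ with $p\nmid g$ and reduce the coefficients modulo $p$; over the field $\zp$ the polynomial $\overline{a}(X)$ has a nonzero coefficient at some index $\neq0$, so either $D:=\max\{k:\overline{a_k}\neq0\}>0$ or $d:=\min\{k:\overline{a_k}\neq0\}<0$ (symmetric cases; say $D>0$). As $\zp$ is an integral domain, $[X^{Dn}]\,\overline{a}(X)^{n}=\overline{a_D}^{\,n}\neq0$, hence $[X^{Dn}]\,a(X)^{n}$ is nonzero in $\zm$ as well. I would then show $F$ sensitive with $\varepsilon=\tfrac12$: given $x\in\az$ and $\delta>0$, choose $n$ with $Dn$ so large that the configuration $e$ carrying a single $1$ at position $Dn$ lies within $\delta$ of the all-$0$ configuration, and set $y=x+e$; by additivity $F^{n}(y)-F^{n}(x)=F^{n}(e)$, and the displayed formula gives $F^{n}(e)_0=[X^{Dn}]\,a(X)^{n}\neq0$ (coordinate $0$ lying inside the window $[Dn-rn,\,Dn+rn]$ since $0<Dn\le rn$), so $F^{n}(x)$ and $F^{n}(y)$ disagree already at coordinate $0$ and are at distance $1>\varepsilon$. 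The one thing to watch here is lining up the spike with a surviving coefficient of $a(X)^{n}$ at a fixed coordinate; the case $d<0$ is identical with the spike placed at position $dn$.
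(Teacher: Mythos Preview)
The paper does not prove Theorem~\ref{sensitive}: it is quoted as a known result from \cite{MaMa99,CDF08} and only \emph{used} later (in Proposition~\ref{decomp}). There is therefore no in-paper argument to compare your proposal against.

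On its own merits your argument is correct and is essentially the standard route. Two small remarks. First, in the contrapositive $(2)\Rightarrow(3)$ your reduction to $m=p^{k}$ via Theorem~\ref{additive} is exactly how the paper itself packages the decomposition (see the sentence following Theorem~\ref{additive}), and the estimate $v_p\bigl(\binom{p^{k}}{j}\,p^{j}\bigr)=k+\bigl(j-v_p(j)\bigr)\ge k+1$ for $1\le j\le p^{k}$ is the right computation; note that you only need the valuation to be $\ge k$, not strictly larger, for the term to vanish in $\zpk$. Second, in $(3)\Rightarrow(1)$ the parenthetical about coordinate $0$ lying in the window $[Dn-rn,\,Dn+rn]$ is superfluous: once you have the convolution identity $F^{n}(e)_0=[X^{Dn}]\,a(X)^{n}$ and have already shown this coefficient is nonzero modulo $p$ (hence in $\zm$), no further support check is needed.
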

Note that from Theorem \ref{sensitive} it immediately follows that, differently from the general case, equicontinuity/sensitivity is a dichotomy for additive CA. The following Theorem gives a characterization of surjective additive CA in terms of coefficients of the local rule.

\begin{theorem}[\cite{MaMa99}]
\label{surjective}
Let $(\mathbf{Z}^{\mathbb{Z}}_{m}, F)$ be an additive CA with local rule $f:\zm^{2r+1}\to\zm$ defined as $f(x_{-r},...x_{r}) = [\Sigma^{r}_{i=-r}a_{i}x_{i}]_{m}$.
Then, the following two statements are equivalent:
\begin{enumerate}
\item $(\mathbf{Z}^{\mathbb{Z}}_{m}, F)$ is surjective;
\item $gcd(m, a_{-r}, \ldots a_{r})=1$.
\end{enumerate}

\end{theorem}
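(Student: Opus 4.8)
The plan is to prove the two implications separately, pushing the substantive work down to the case of a prime-power alphabet via Theorem~\ref{additive}. The implication $(1)\Rightarrow(2)$ I would get by contraposition: if $d:=\gcd(m,a_{-r},\dots,a_r)>1$, then since $d\mid m$ and $d$ divides every coefficient, each entry $\sum_{j=-r}^{r}a_j x_{i+j}$ of $F(x)$, as well as its residue modulo $m$, is a multiple of $d$; hence every configuration in the image of $F$ has all its entries in $d\zm$, and since $d>1$ the configuration all of whose entries equal $1$ is not of this form, so $F$ is not surjective.

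For $(2)\Rightarrow(1)$ I would first record the reduction to prime powers. Surjectivity is invariant under topological conjugacy, and a product of maps is onto exactly when each factor is; also, $\gcd(m,a_{-r},\dots,a_r)=1$ holds iff no prime divisor of $m$ divides all of $a_{-r},\dots,a_r$, which for a prime-power factor $p^{n}$ of $m$ is precisely the hypothesis $\gcd(p^{n},[a_{-r}]_{p^{n}},\dots,[a_r]_{p^{n}})=1$ attached to $[F]_{p^{n}}$. So by Theorem~\ref{additive} it suffices to treat $m=p^{k}$ under the hypothesis ``$p\nmid a_{j_0}$ for at least one index $j_0$'', and I would induct on $k$. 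For $k=1$ the ring $\zp$ is a field; letting $\ell$ and $u$ be the smallest and largest indices $j$ with $p\nmid a_j$, the coefficients $a_\ell,a_u$ are invertible, and given any $y\in\zp^{\z}$ one constructs a preimage $x$ by choosing $x_0,\dots,x_{u-\ell-1}$ freely and then propagating the relation $\sum_{j=\ell}^{u}a_j x_{i+j}=y_i$ in both directions --- solving for $x_{i+u}$ via $a_u^{-1}$ to define $x_{u-\ell},x_{u-\ell+1},\dots$, and for $x_{i+\ell}$ via $a_\ell^{-1}$ to define $x_{-1},x_{-2},\dots$ --- so that $F(x)=y$. (Equivalently: multiplication by the nonzero Laurent polynomial $\sum_j a_j X^{j}$ is onto $\zp^{\z}$.)

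For the inductive step ($k\geq2$), the isomorphism $t\mapsto pt$ of $\mathbf{Z}_{p^{k-1}}$ onto $p\mathbf{Z}_{p^{k}}$, applied coordinatewise, is --- by the identity $[pN]_{p^{k}}=p\,[N]_{p^{k-1}}$ together with the linearity of $F$ --- a topological conjugacy between $[F]_{p^{k-1}}$ and the restriction of $F$ to $(p\mathbf{Z}_{p^{k}})^{\z}$; by the inductive hypothesis the latter restriction maps $(p\mathbf{Z}_{p^{k}})^{\z}$ onto itself. Then, given $y\in\mathbf{Z}_{p^{k}}^{\z}$, I would take a preimage $x'$ of $[y]_{p}$ under $[F]_{p}$ (base case), lift it entrywise to some $\tilde x\in\mathbf{Z}_{p^{k}}^{\z}$, observe that $F(\tilde x)-y\in(p\mathbf{Z}_{p^{k}})^{\z}$, choose $w\in(p\mathbf{Z}_{p^{k}})^{\z}$ with $F(w)=y-F(\tilde x)$, and conclude $F(\tilde x+w)=y$ --- a Hensel-style correction. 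The main obstacle, I expect, lies exactly in these set-up steps: verifying cleanly that $F$ carries $(p\mathbf{Z}_{p^{k}})^{\z}$ isomorphically onto a copy of the CA $[F]_{p^{k-1}}$, and keeping track of the bidirectional recursion in the base case (including the degenerate case $u=\ell$); everything else is routine bookkeeping.
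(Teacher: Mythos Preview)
The paper does not prove Theorem~\ref{surjective}; it is stated with attribution to~\cite{MaMa99} and used as background, so there is no in-paper argument to compare your proposal against.

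For what it is worth, your argument is correct. The contrapositive for $(1)\Rightarrow(2)$ is immediate as you say. For $(2)\Rightarrow(1)$, the reduction to $m=p^{k}$ via Theorem~\ref{additive} is legitimate (that decomposition is purely a CRT conjugacy and does not presuppose the surjectivity criterion), and the Hensel-type induction on $k$ works exactly as you outline: the identity $[pN]_{p^{k}}=p\,[N]_{p^{k-1}}$ indeed yields a conjugacy between $[F]_{p^{k-1}}$ and the restriction of $F$ to $(p\mathbf{Z}_{p^{k}})^{\z}$; the base case over the field $\zp$ is handled by the two-sided recursion (the degenerate case $u=\ell$ making $F$ a shift composed with a bijection of the alphabet); and additivity of $F$ makes the correction step $F(\tilde x+w)=F(\tilde x)+F(w)=y$ valid. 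None of the obstacles you flag is more than bookkeeping.
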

The following Lemma expresses an other useful property for surjective additive CA and it will be used in the sequel.
\begin{lemma}[\cite{DMM03}]\label{permut} Let $(\mathbf{Z}^{\mathbb{Z}}_{p^{k}}, F)$ be a surjective additive CA with $p$ prime
and local rule  $f:\mathbf{Z}_{p^{k}}^{2r+1}\to\mathbf{Z}_{p^{k}}$ defined as $f(x_{-r},...x_{r}) = [\Sigma^{r}_{i=-r}a_{i}x_{i}]_{p^{k}}$. Set
\begin{displaymath}
L = \min\{j : gcd(a_{j}, p) = 1\}\quad and \quad R = \max\{j : gcd(a_{j}, p) = 1\}.
\end{displaymath}
Then,  there exists an integer $h \geq 1$ such that the local rule of the CA $(\mathbf{Z}^{\mathbb{Z}}_{p^{k}}, F^h)$ can be expressed by the additive map $f^h:\mathbf{Z}_{p^{k}}^{hR-hL+1}\to\mathbf{Z}_{p^{k}}$ defined as \begin{displaymath}
f^{h}(x_{hL}, ..., x_{hR}) = \left[\Sigma^{hR}_{i = hL}b_{i}x_{i} \right]_{p^{k}}
\end{displaymath}
for coefficients $b_{hL}, \ldots, b_{hR}\in\mathbf{Z}_{p^{k}}$ such that $gcd(b_{hL}, p) = gcd(b_{hR},p) = 1$.
\end{lemma}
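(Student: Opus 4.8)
I would prove this by passing to the classical Laurent-polynomial representation of additive CA, where iteration becomes multiplication of polynomials, and then showing that a sufficiently high power of the polynomial attached to $F$ ``forgets'' modulo $p^k$ every coefficient sitting outside the window $[L,R]$, because iteration amplifies the divisibility by $p$ of exactly those coefficients.

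First I would set up the dictionary: identify a configuration $x$ with the formal Laurent series $\sum_{n\in\z}x_nX^n$ over $\mathbf{Z}_{p^k}$; a direct computation shows that $F$ acts as multiplication by $P(X)=\sum_{j=-r}^{r}a_jX^{-j}\in\mathbf{Z}_{p^k}[X,X^{-1}]$, and hence $F^h$ acts as multiplication by $P(X)^h$. It therefore suffices to exhibit an $h\ge 1$ such that $P(X)^h$ involves only monomials $X^{-i}$ with $hL\le i\le hR$ and such that the coefficients of $X^{-hL}$ and $X^{-hR}$ are units of $\mathbf{Z}_{p^k}$ (equivalently, coprime to $p$): reading off these coefficients as $b_{hL},\dots,b_{hR}$ then yields precisely the local rule $f^h$.

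Next, surjectivity of $F$ (Theorem~\ref{surjective}) forces some $a_j$ to be coprime to $p$, so $L$ and $R$ are well defined with $-r\le L\le R\le r$, and by minimality of $L$ and maximality of $R$ every $a_j$ with $j<L$ or $j>R$ is divisible by $p$. I would split $P=T+pS$, where $T(X)=\sum_{j=L}^{R}a_jX^{-j}$ has all exponents in $[-R,-L]$ with $a_L,a_R$ units, and $S\in\mathbf{Z}_{p^k}[X,X^{-1}]$. Then I would take $h=p^{k-1}$ and expand in the commutative ring $\mathbf{Z}_{p^k}[X,X^{-1}]$:
\[
P^h=(T+pS)^h=\sum_{i=0}^{h}\binom{h}{i}p^{i}S^iT^{h-i}.
\]
For $i\ge k$ the factor $p^{i}$ is already $0$ modulo $p^k$, and for $1\le i\le k-1$ the elementary valuation identity $v_p\binom{p^{k-1}}{i}=(k-1)-v_p(i)$ together with $v_p(i)\le i-1$ gives $v_p\bigl(\binom{h}{i}p^{i}\bigr)\ge k$; hence every mixed term vanishes and $P^h=T^h$. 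Writing $T=X^{-R}\widetilde{T}$ with $\widetilde{T}\in\mathbf{Z}_{p^k}[X]$ of degree $R-L$, constant term $a_R$ and leading coefficient $a_L$ (both units), and using that the units of $\mathbf{Z}_{p^k}$ form a group, $\widetilde{T}^{h}$ has degree exactly $h(R-L)$, constant term $a_R^{h}$ and leading coefficient $a_L^{h}$, still units. Thus $P^h=X^{-hR}\widetilde{T}^{h}$ has all its monomials in the window $[-hR,-hL]$, and the extreme coefficients $b_{hL}=a_L^{h}$ and $b_{hR}=a_R^{h}$ are coprime to $p$, which is the assertion.

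The only delicate point is the choice of $h$ in the third step: one must pick $h$ so that, after expanding $(T+pS)^h$, every surviving contribution coming from the ``tail'' $pS$ carries at least $k$ factors of $p$. Recognising that $h=p^{k-1}$ does the job rests on the standard $p$-adic estimate for $\binom{p^{k-1}}{i}$; everything else — the polynomial dictionary and the stability of unit coefficients under taking powers — is routine bookkeeping.
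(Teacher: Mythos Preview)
The paper does not supply a proof of this lemma; it is quoted from \cite{DMM03} and used as a black box in the proof of Proposition~\ref{decomp}. So there is nothing in the present paper to compare your argument against.

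That said, your argument is correct and is in the spirit of the standard treatment of additive CA via Laurent polynomials. The only step worth checking carefully is the vanishing of the mixed terms in $(T+pS)^{h}$ for $h=p^{k-1}$: for $1\le i\le k-1$ you invoke $v_p\!\left(\tbinom{p^{k-1}}{i}\right)=(k-1)-v_p(i)$, and together with $i-v_p(i)\ge 1$ (valid for every $i\ge 1$, since $p^{v_p(i)}\le i$ and $p^m\ge m+1$) this gives $v_p\!\left(\tbinom{h}{i}p^{i}\right)\ge k$; for $i\ge k$ the factor $p^i$ alone already kills the term modulo $p^k$. Hence $P^h\equiv T^h\pmod{p^k}$, whose support lies in $[hL,hR]$ with extremal coefficients $a_L^{h}$ and $a_R^{h}$, both units in $\mathbf{Z}_{p^k}$. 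That is exactly the assertion of the lemma. One cosmetic point: it is cleanest to carry out the binomial expansion over $\mathbf{Z}[X,X^{-1}]$ (lifting each $a_j$ to an integer representative) and only then reduce modulo $p^k$, so that the integer divisibility statements about $\tbinom{h}{i}p^i$ translate directly into vanishing in $\mathbf{Z}_{p^k}$; your phrasing already implicitly does this.
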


\begin{remark}
Let $L$ and $R$ be defined in the Lemma~\ref{permut}. We want to stress that, by the surjectivity condition on the coefficients expressed by the Theorem~\ref{surjective}, both the integers $L$ and $R$ exist.
\end{remark}
We are now ready to give a classification of the strictly temporally periodic orbits for surjective additive CA whose alphabet cardinality is a power of prime.

\begin{proposition} \label{decomp} Let $(\mathbf{Z}^{\mathbb{Z}}_{p^{k}}, F)$ be a surjective  additive CA with $p$ prime and local rule  $f:\mathbb{Z}_{p^k}^{2r+1}\to\mathbb{Z}_{p^k}$ defined as $f(x_{-r},...x_{r}) = [\Sigma^{r}_{i=-r}a_{i}x_{i}]_{p^{k}}$. Then, exactly one of the following cases occurs:
\begin{itemize}
\item[$1.$] $(\mathbf{Z}^{\mathbb{Z}}_{p^{k}}, F)$ is equicontinuous and $STP(F)$ is dense,
\item[$2.$] $(\mathbf{Z}^{\mathbb{Z}}_{p^{k}}, F)$ is positively expansive and $STP(F)$ is empty,
\item[$3.$] $(\mathbf{Z}^{\mathbb{Z}}_{p^{k}}, F)$ is topologically transitive but not positively expansive and $STP(F)$ is empty.
\end{itemize}
\end{proposition}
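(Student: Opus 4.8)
The plan is to handle the three cases by using the characterizations recalled above together with Lemma~\ref{permut}. By Theorem~\ref{sensitive}, equicontinuity and sensitivity form a dichotomy for additive CA, so exactly one of the following holds: either $(\mathbf{Z}^{\mathbb{Z}}_{p^{k}}, F)$ is equicontinuous, or it is sensitive. In the equicontinuous case, since $F$ is also surjective, case~1 follows directly from Proposition~\ref{equic}: $STP(F)=\az\setminus SP(F)$ is residual, hence dense. So it remains to treat the sensitive case and to show that then $STP(F)$ is empty, distinguishing whether $F$ is positively expansive (case~2) or not (case~3). In both subcases the conclusion is the same, so the real content is: \emph{every sensitive surjective additive CA over $\mathbf{Z}_{p^k}$ has $STP(F)=\emptyset$}; the split into cases~2 and~3 is then just a matter of which of the two (mutually exclusive, by definition of positive expansivity) possibilities occurs, together with the observation that a sensitive additive CA is topologically transitive (equivalently mixing, by~\cite{CDM04,CDF08}), so case~3 indeed covers all the non-positively-expansive sensitive ones.

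For the key step, assume $F$ is sensitive and let $x\in\az$ be any temporally periodic configuration, say $F^t(x)=x$ with $t>0$; I want to show $x$ is spatially periodic. Apply Lemma~\ref{permut} to $F$: there is an integer $h\ge1$ such that $F^h$ has local rule $f^h(x_{hL},\dots,x_{hR})=\bigl[\sum_{i=hL}^{hR} b_i x_i\bigr]_{p^k}$ with $\gcd(b_{hL},p)=\gcd(b_{hR},p)=1$, i.e.\ $F^h$ is both leftmost and rightmost permutative, and in particular its local rule genuinely depends on the two extreme coordinates $hL$ and $hR$. Replacing $t$ by $ht$ (still a period of $x$, since $F^{ht}(x)=x$), we may assume the periodicity is witnessed by a power of $F^h$; write $G=F^h$ and note $G^t(x)=x$. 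Now $G$ is permutative, hence closing, and the crucial fact is that for permutative (more generally closing) CA the map $G$, restricted to any one-sided "shift of finite type" picture, is expansive in the appropriate directional sense. Concretely: because $b_{hL}$ and $b_{hR}$ are units modulo $p^k$, from the values $G^n(x)_{[a,b]}$ on a sufficiently long window one can reconstruct, running time backward inside the window, the values $x_j$ for indices $j$ outside $[a,b]$ on both sides — this is the standard "information propagates at maximal speed $hL$ on the left and $hR$ on the right" argument for permutative CA. Since $x$ is $G^t$-periodic, its entire orbit $\{x,G(x),\dots,G^{t-1}(x)\}$ is finite, so along this orbit only finitely many distinct windows of any fixed length appear; combined with the maximal-speed reconstruction this forces the full configuration $x$ to be determined by finitely much data in a periodic pattern, i.e.\ $x$ is spatially periodic. (Equivalently: $\{\sigma^n(x)\}_{n\in\z}$ would otherwise be an infinite set of $G^t$-periodic points, but for a permutative additive CA conjugate-to/sharing a subshift structure with a full shift there are only finitely many $G^t$-periodic points with any bounded "spread", yielding a contradiction — this mirrors the counting argument used in the proof of Proposition~\ref{exp}.) Hence $x\in JP(F)$, so $STP(F)=\emptyset$.

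I expect the main obstacle to be making the "information reconstruction" argument fully rigorous and uniform in the non-positively-expansive case~3, where we do not have the clean conjugacy to a one-sided full shift that Proposition~\ref{exp} exploited. One must carefully phrase how leftmost and rightmost permutativity of $G=F^h$ (with unit coefficients $b_{hL},b_{hR}$) lets one solve for outside coordinates from a window and its images, and then bound the number of $G^t$-periodic configurations compatible with a given finite window so as to force eventual spatial periodicity of $x$; additivity should make the linear algebra over $\mathbf{Z}_{p^k}$ routine, but the bookkeeping of offsets $hL,hR$ and window lengths is where care is needed. The remaining bits — invoking the dichotomy of Theorem~\ref{sensitive} to get case~1, and noting that sensitive additive CA are transitive (hence mixing) so that the dichotomy "positively expansive vs.\ not" exhausts the sensitive case as cases~2 and~3 — are immediate from the results quoted above. \qed
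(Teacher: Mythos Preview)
Your overall strategy is sound and the trichotomy you set up (equicontinuous versus sensitive, then positively expansive versus not) matches the paper's. The treatment of case~1 via Proposition~\ref{equic} and of case~2 via Proposition~\ref{exp} is identical to the paper's. The genuine difference is in case~3.

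The paper does \emph{not} attempt a direct counting argument there. Instead, after obtaining $G=F^h$ with unit extremal coefficients at positions $hL,hR$, it splits on the signs of $L,R$: when $0<L\le R$ (or symmetrically $L\le R<0$) the rule of $G$ is one-sided, the \emph{lifted} CA $(\mathbf{Z}_{p^k}^{\n},G)$ is positively expansive, and Proposition~\ref{exp} gives $STP(G)=\emptyset$ for the lifted CA; this is then transferred back to the two-sided CA. Topological transitivity in case~3 comes from the fact that a one-sided rightmost-permutative CA is mixing~\cite{CDM02}. So the paper leverages Proposition~\ref{exp} as a black box in all non-equicontinuous cases, at the price of the lifting step.

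Your alternative --- show directly that a sensitive surjective additive CA over $\mathbf{Z}_{p^k}$ has only finitely many temporally periodic points of each period, then run the $\{\sigma^n(x)\}$ counting argument of Proposition~\ref{exp} --- also works, but your description of it is muddled. The talk of ``reconstructing $x_j$ outside $[a,b]$ from $G^n(x)_{[a,b]}$ running time backward'' and of ``finitely many $G^t$-periodic points with bounded spread'' is not the right picture. The clean statement is purely spatial: writing out $G^t(x)=x$ gives, for every $i$, the equation $\sum_{j=htL}^{htR} c_j\,x_{i+j}=x_i$ with $c_{htL}=b_{hL}^t$ and $c_{htR}=b_{hR}^t$ units in $\mathbf{Z}_{p^k}$. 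Sensitivity (via Theorem~\ref{sensitive}) forces $L<0$ or $R>0$, so this is a genuine linear recurrence whose extremal coefficients are units; hence any window of $htR-\min(0,htL)$ consecutive coordinates determines $x$ on all of $\z$, and there are at most $(p^k)^{ht(R-\min(0,L))}$ solutions. Now $\{\sigma^n(x):n\in\z\}$ is a $\sigma$-orbit inside this finite set, so $\sigma^q(x)=x$ for some $q>0$. This is the ``routine linear algebra'' you allude to; once stated this way there is no obstacle. Your appeal to ``sensitive additive $\Rightarrow$ transitive'' to label case~3 is correct (it is in the classification of~\cite{MaMa99,CDM04}), though the paper instead derives transitivity from the explicit permutative/one-sided form of $G$.
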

\begin{proof}
Define $L = \min\{j : gcd(a_{j}, p) = 1\}$ and $R = \max\{j : gcd(a_{j}, p) = 1\}$. By surjectivity condition from Theorem~\ref{surjective}, $L$ and $R$ exist. By Lemma~\ref{permut}, there exists an integer $h > 0$ such that the local rule $f^h$ of the additive CA $(\mathbf{Z}^{\mathbb{Z}}_{p^{k}}, F^{h})$
can be expressed as $f^{h}(x_{hL}, ..., x_{hR}) = \left[\Sigma^{hR}_{i = hL}b_{i}x_{i} \right]_{p^{k}}$ for coefficients $b_{hL}, \ldots, b_{hR}\in\mathbf{Z}_{p^{k}}$ such that $gcd(b_{hL}, p) = gcd(b_{hR},p) = 1$. Condition $gcd(b_{hL}, p) = gcd(b_{hR},p) = 1$ implies that $f^h$ is permutative both in the leftmost variable $x_{hL}$ and the rightmost variable  $x_{hR}$. There are three possible disjoint cases:
\begin{itemize}
\item [$\mathbf{a.}$] $L=R=0$. Then, the local rule $f^{h}$ becomes $f^{h}(x_{hL}, ..., x_{hR}) = \left[\Sigma^{hR}_{i = hL}b_{i}x_{i} \right]_{p^{k}}= \left[b_0x_0\right]_{p^{k}}$, i.e.  $f^{h}$ is a permutation on $\{0,..,p^k\}$, which implies that $(\mathbf{Z}^{\mathbb{Z}}_{p^{k}}, F^h)$ is equicontinuous. Immediately follows that $(\mathbf{Z}^{\mathbb{Z}}_{p^{k}}, F)$ is equicontinuous and, by Proposition ~\ref{equic}, $STP(F)$ is dense (case 1).
\item [$\mathbf{b.}$] $L < 0 < R$. Since $f^h$ is permutative both in $x_{hL}$ and $x_{hR}$, the additive CA $(\mathbf{Z}^{\mathbb{Z}}_{p^{k}}, F^{h})$ is positively expansive which implies that also the CA 
$(\mathbf{Z}^{\mathbb{Z}}_{p^{k}}, F)$ is positively expansive and, by Proposition~\ref{exp}, $STP(F)=\emptyset$ (case 2).
\item [$\mathbf{c.}$] $0 < L \leq R$ or $L \leq R < 0$. Suppose that $0 < L \leq R$ (the case $L \leq R < 0$ is similar). So, the local rule $f^h$  of $F^h$ is one-sided and permutative in its rightmost position. Thus, the CA $(\mathbf{Z}^{\mathbb{Z}}_{p^{k}}, F^h)$ is topologically mixing~\cite{CDM02} but not positively expansive and, hence, $(\mathbf{Z}^{\mathbb{Z}}_{p^{k}}, F)$ is topologically transitive but not positively expansive. On the other hand, the lifted CA  $(\mathbf{Z}^{\n}_{p^{k}}, F^{h})$ is positively expansive. Therefore, by Proposition~\ref{exp}, it follows that $STP(F^h)=\emptyset$ for the lifted CA and then this condition holds also for $(\mathbf{Z}^{\z}_{p^{k}}, F^h)$. Since $STP(F^h)=STP(F)$, we can conclude that $(\mathbf{Z}^{\mathbb{Z}}_{p^{k}}, F)$ is topologically transitive and $STP(F)=\emptyset$ (case 3).
\end{itemize}
\qed
\end{proof}
By combining Proposition \ref{decomp} and Theorem \ref{additive} we can obtain a complete classification of the strictly temporally periodic orbits for surjective additive CA.

\begin{proposition} \label{spadd} Let $(\mathbf{Z}^{\mathbb{Z}}_{m}, F)$ be a surjective additive CA. The following statements are true:
\begin{itemize}
\item[$1.$] $STP(F)$ is either dense or empty;
\item[$2.$] $STP(F)=\emptyset$ if and only if $(\mathbf{Z}^{\mathbb{Z}}_{m}, F)$ is transitive.
\end{itemize}
\end{proposition}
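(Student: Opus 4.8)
\textbf{Proof plan for Proposition~\ref{spadd}.}
The strategy is to lift the complete trichotomy of Proposition~\ref{decomp} from the prime-power case to arbitrary $m$ via the decomposition Theorem~\ref{additive}. First I would write the prime factorization $m=p_1^{n_1}\cdots p_l^{n_l}$ and invoke Theorem~\ref{additive} (iterated) to obtain a topological conjugacy between $(\mathbf{Z}^{\mathbb{Z}}_{m},F)$ and the product CA $(\prod_i \mathbf{Z}^{\mathbb{Z}}_{p_i^{n_i}},\prod_i [F]_{p_i^{n_i}})$; surjectivity of each factor follows from surjectivity of $F$ together with Theorem~\ref{surjective}. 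Since $STP$ is invariant under topological conjugacy, it suffices to analyze the product. The key observation is how $STP$ behaves under products: a configuration $(x^{(1)},\ldots,x^{(l)})$ is temporally periodic for the product iff each $x^{(i)}$ is temporally periodic for the corresponding factor (take the lcm of the periods), and it is spatially periodic iff each component is spatially periodic. Hence the product configuration lies in $STP$ iff every component is temporally periodic and \emph{at least one} component lies in $STP(F_i)$.

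From this I would derive statement~1 as follows. Apply Proposition~\ref{decomp} to each factor $(\mathbf{Z}^{\mathbb{Z}}_{p_i^{n_i}},[F]_{p_i^{n_i}})$. If every factor falls into case~2 or case~3 of Proposition~\ref{decomp}, then $STP([F]_{p_i^{n_i}})=\emptyset$ for all $i$, and by the product characterization above $STP$ of the product is empty. Otherwise at least one factor, say the $j$-th, falls into case~1, so it is equicontinuous with $STP([F]_{p_j^{n_j}})$ dense; since surjective equicontinuous additive CA are (eventually) periodic, $TP$ of every factor is the whole space for a suitable common power, so I can pick any temporally periodic configuration in each other component and a dense set of $STP$ configurations in the $j$-th, obtaining a dense subset of $STP$ of the product. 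Transporting back through the conjugacy gives that $STP(F)$ is dense. Thus $STP(F)$ is either dense or empty.

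For statement~2, I would show $STP(F)=\emptyset$ is equivalent to: every factor is in case~2 or case~3 of Proposition~\ref{decomp}, i.e.\ every factor $(\mathbf{Z}^{\mathbb{Z}}_{p_i^{n_i}},[F]_{p_i^{n_i}})$ is topologically transitive (positively expansive being a special case of transitive there). So the remaining task is to check that the product of transitive additive CA is transitive, and conversely that transitivity of the product forces each factor to be transitive. The forward direction uses that for additive CA transitivity coincides with topological mixing (cited after Theorem on mixing, via~\cite{CDM04}), and a finite product of mixing systems is mixing, hence transitive; the converse is immediate since a factor of a transitive system is transitive and each projection is a factor map. Combining, $STP(F)=\emptyset \iff$ each factor transitive $\iff$ the product (hence $F$) is transitive.

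The main obstacle is the product characterization of $STP$ and, relatedly, making sure the density argument in statement~1 is legitimate: I must know that in the non-empty case the factors that do not contribute $STP$ points still have enough temporally periodic configurations to build a dense set in the product. This is precisely where I use that a surjective equicontinuous additive CA factor satisfies $F^p=\mathrm{id}$ for some $p$ (so $TP$ is everything) and that in cases~2 and~3 the additive CA has DPO by~\cite{CDM04} (so $TP$, in fact $JP$, is dense there) — then a standard product-of-dense-sets argument yields density of $STP$ in the product, and the conjugacy of Theorem~\ref{additive} carries it back to $(\mathbf{Z}^{\mathbb{Z}}_{m},F)$.
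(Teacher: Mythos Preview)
Your proposal is correct and follows essentially the same route as the paper: decompose via Theorem~\ref{additive}, classify each prime-power factor with Proposition~\ref{decomp}, use DPO of surjective additive CA to get density when some factor is equicontinuous, and use preservation of transitivity under products and conjugacy for the emptiness and equivalence statements. You are simply more explicit than the paper in two places --- the componentwise characterization of $STP$ for a product, and the reason transitivity survives the product (via the transitivity\,$=$\,mixing equivalence for additive CA) --- but these are elaborations of steps the paper asserts rather than a different argument.
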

\begin{proof}
Let $m=p_1^{n_1}\cdots p_l^{n_l}$ be the prime factor decomposition of $m$. By Theorem~\ref{additive}, the cellular automaton $(\mathbf{Z}^{\mathbb{Z}}_{m}, F)$  is
topologically conjugated to the product of $l$ surjective additive CA $(\zpini, [F]_{p_i^{n_i}})$. By Proposition~\ref{decomp}, each $(\zpini, [F]_{p_i^{n_i}})$ can be either equicontinuous or topologically transitive. If some cellular automaton $(\zpini, [F]_{p_i^{n_i}})$ in the decomposition is equicontinuous, then, by Proposition~\ref{equic}, it holds that $STP([F]_{p_i^{n_i}})$ is dense in $\zpini$. By topological conjugacy and since any surjective additive CA has DPO~\cite{CFMM00,CDM04}, it follows that $STP(F)$ has to be dense. Conversely, if the decomposition only contains topologically transitive CA, then, by Proposition~\ref{decomp} and the fact that topological transitivity is preserved under the product and topological conjugacy, it holds that $STP(F)$ is empty. Thus, statement 1. is true. 

Furthermore, $STP(F)$ is empty if and only if $STP([F]_{p_i^{n_i}})$ is also empty for each $(\zpini, [F]_{p_i^{n_i}})$ in the decomposition of $(\mathbf{Z}^{\mathbb{Z}}_{m}, F)$ and, by Proposition~\ref{decomp}, this happens if and only if each $(\zpini, [F]_{p_i^{n_i}})$ is topologically transitive, \ie, $(\mathbf{Z}^{\mathbb{Z}}_{m}, F)$ is topologically transitive. Therefore, statement 2. is true.\qed
\end{proof}

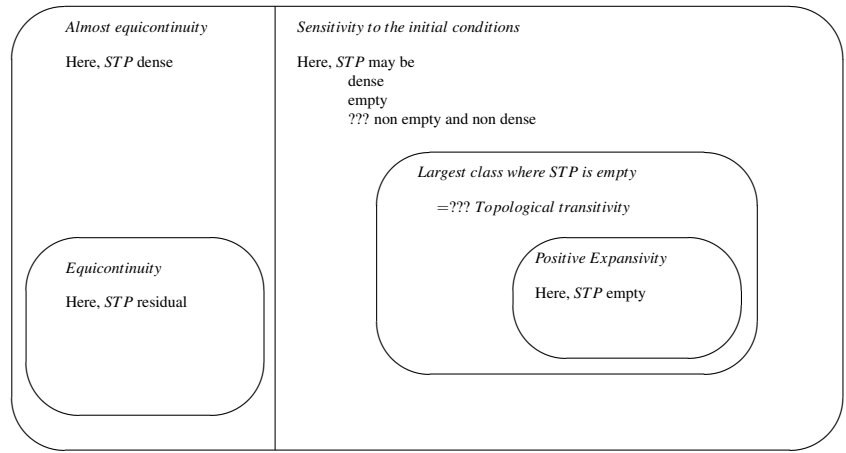
\begin{figure}
\begin{center}
\bigbreak { {\tiny    \setlength{\unitlength}{1.2pt}
\begin{picture}(282,160)
\thinlines    \put(204,58){\oval(72,38)}
              \put(138,96){$Largest\ class\ where\ STP\ is\ empty$}
	      \put(144,85){$=$??? $Topological\ transitivity$}
              \put(175,69){$Positive\ Expansivity$}
               \put(175,58){Here, $STP$ empty}
              \put(100,142){$Sensitivity\ to\ the\ initial\ conditions$}
              \put(100, 131){Here, $STP$ may be}
              \put(116,125){dense}
              \put(116,119){empty}
              \put(116,113){??? non empty and non dense}
              \put(27,66){$Equicontinuity$}
              \put(27,55){Here, $STP$ residual}
              \put(27,142){$Almost\  equicontinuity$}
              \put(27,131){Here, $STP$ dense}
              \put(52,49){\oval(75,56)}
              \put(185,69){\oval(120,70)}
              \put(93,150){\line(0,-1){140}}
              \put(141,80){\oval(262,140)}
\end{picture}}
}
\caption{The size of the set $STP(F)$ for general CA $\AZ$ belonging to the various classes of dynamical complexity. The situation simplifies in the case of additive CA.}
\label{sum}
\end{center}
\end{figure}

\section{Conclusions}\label{conclusions}
In this paper, we have studied the set of strictly temporally periodic points of surjective CA and showed that its size is inversely related to the dynamical complexity of the considered CA.  In particular, this set is residual or dense, for equicontinuous or almost equicontinuous CA, respectively, while it is empty in the class of positively expansive CA (see Figure~\ref{sum}, for a summary). Since there exist strictly sensitive to initial conditions CA with a non empty (and in particular dense) set of strictly temporally periodic points the following questions naturally arise: is there a CA such that the set of strictly temporally periodic orbits, $STP$, is neither empty nor dense? What is the largest class of sensitive  CA such that $STP$ is empty? In more general terms, can we restate the definition of topological chaos for CA in terms of strictly temporally periodic orbits? In particular, are CA with no strictly temporally periodic orbits chaotic? At the present we have no formal proof for the general case, while we can provide an answer to the above question for the class of additive CA. Indeed,  we have proved that the set of strictly temporally periodic points of additive CA is empty if and only if the cellular automaton is topologically transitive. Thus, in the additive setting, empty $STP$ implies chaotic behavior.

\end{document}